\newcommand{\grad}{\text{grad\,}}
\newcommand{\Tr}{\mathrm{Tr}}
\newcommand{\argmin}{\mathrm{arg\,min\,\,}}
\newcommand{\dist}{\mathrm{dist}}
\newcommand{\e}{\mathrm{e}}
\newtheorem{definition}{Definition}
\newtheorem{theorem}{Theorem}
\newtheorem{corollary}{Corollary}
\newtheorem{lemma}{Lemma}
\title{\LARGE \bf
Noise Resilience and Robust Convergence Guarantees for the Variational Quantum Eigensolver
}
\author{Mirko Legnini and Julian Berberich
\thanks{M. Legnini and J.Berberich are with the Institute for Systems Theory and Automatic Control,
        University of Stuttgart, Germany
        {\tt\small mirko.legnini@ist.uni-stuttgart.de}}%
\thanks{J. Berberich is also with the Center for Integrated Quantum Science and Technology (IQST), University of Stuttgart, 70569 Stuttgart, Germany}
}
\begin{document}

\maketitle
\thispagestyle{empty}
\pagestyle{empty}

\begin{abstract}
Variational Quantum Algorithms (VQAs) are a class of hybrid quantum-classical algorithms that leverage on classical optimization tools to find the optimal parameters for a parameterized quantum circuit.  One relevant application of VQAs is the Variational Quantum Eigensolver (VQE), which aims at steering the output of the quantum circuit to the ground state of a certain Hamiltonian. Recent works have provided global convergence guarantees for VQEs under suitable local surjectivity and smoothness hypotheses, but little has been done in characterizing convergence of these algorithms when the underlying quantum circuit is affected by noise.  
In this work, we derive an upper bound on the error on the optimal parameters of a VQE under the effect of different coherent and incoherent noise processes. We then procced to show robust convergence guarantees of the algorithm to the perturbed optimal parameters. Our work provides novel theoretical insight into the behavior of VQAs subject to noise. 
Furthermore, we accompany our results with numerical simulations implemented via Pennylane.    

\end{abstract}

\section{Introduction}
Over the last few years quantum technologies have been evolving at a steadfast pace, leading to significant improvements in hardware capabilities. 
Currently available quantum computers are commonly referred to as Noisy Intermediate Scale Quantum (NISQ) devices \cite{Preskill2018}. 
The main limitations of these devices are the number of physical qubits, currently in the hundreds, and the sensitivity to noise. 
These limitations render large scale applications (e.g. \cite{Shor1997}) hard to perform, and the research community is putting significant effort into finding applications that can provide near-term quantum advantage \cite{Kim2023,Arute2019}. 
Variational Quantum Algorithms (VQAs) \cite{Cerezo2021} are one widely studied approach to this end. 
In this hybrid approach, the idea is to build a parameterized quantum circuit and employ a classical iterative optimization routine to find the set of optimal parameters that minimizes a cost function encoding the problem. 
VQAs have applications, among others, in quantum chemistry \cite{Keijzer2022,Grimsley2019}, combinatorial optimization problems \cite{Farhi2014}, and linear algebra \cite{BravoPrieto2023}.
Despite their widespread success, a main obstacle to the application of VQAs is the complexity of the underlying optimization problem.  It has been shown that finding a global minimizer for the VQA cost is, in general, an NP-hard non-convex problem \cite{Bittel2021}. 
Noticeably, the structure of VQAs is very similar to that of quantum optimal control problems \cite{Magann2021,Berberich2024}. Quantum optimal control landscapes have been widely studied \cite{Chakrabarti2007}, and have already been related with the
optimization landscape of VQAs \cite{Ge2022,Huembeli2021}. However, a complete characterization of conditions for convergence to a global minimum is still lacking. Results under suitable convexity assumptions can be found in \cite{Harrow2021,Sweke2020}. 
Furthermore, the benefits of overparameterization have been highlighted via a dynamical Lie Algebra-based approach \cite{Larocca2023}. Recently, \cite{Wiedmann2025} proposed global convergence guarantees for the Variational Quantum Eigensolver (VQE) based on local surjectivity.
The VQE is a commonly utilized VQA \cite{Peruzzo2014,Tilly2022}, where the cost function is chosen as a quantum measurement with respect to some observable $O$, and the goal of the parameterized circuit is to steer the system to its ground state.
\newline

One reason why VQAs are a good fit to NISQ hardware is their inherent robustness to coherent errors. In particular, it can be shown that, for systematic gate biases, 
the error introduced by the biases can be automatically corrected by the algorithm by converging to a different parameter vector, as long as the error respects the circuit symmetries \cite{McClean2016}.
Robustness of VQAs against specific rotation errors and its links to generalizability in quantum machine learning has been studied in \cite{Berberich2024a,Berberich2024b}. 
A different angle to look at robustness of VQAs is understanding how the optimal parameters change due to noise. 
For some applications, e.g. Variational Quantum Compiling (VQC) \cite{Khatri2019}, the output of the algorithm is the set of parameters 
to which we converge rather than the optimal cost or final state. Robustness of VQC was studied in \cite{Sharma2020}, where they found that 
optimal parameters remain unchanged for specific error classes. 
The question of how general errors affect the optimal parameters of a given VQE remains largely unanswered. 
\newline

The contribution of this work is twofold. First, we derive upper bounds for the distance between the optimal parameters of a noise-freeVQE and the corresponding noise-perturbed case. We define this upper bound as the noise resilience of the VQE.
In order to derive upper bounds, we provide a notion of perturbation level for both coherent and incoherent noise processes. We then show that the 
optimal parameter distance grows at most polynomially with the perturbation level.  
We further show that, for classes of well-behaved cost functions, the error bound grows linearly with the perturbation level. 
Our approach combines tools from optimization theory, differential geometry and Morse theory, and is the first to provide explicit error bounds on general noise models and variational algorithms.   
Second, we study robust convergence of locally surjective VQEs, showing that for a small enough perturbation level the algorithm converges asymptotically to a neighborhood of the noise-free global minimum.
\\

The rest of the paper is structured as follows.
In Section II, we provide an introduction to VQAs and VQE convergence guarantees and on quantum noise. 
In Section III, we introduce the perturbed observable VQE problem and we show noise resilience by deriving upper bounds on 
the distance between any optimal parameter of the perturbed problem and the optimal parameter set of the 
noise-free problem.
In Section IV, we show robust convergence guarantees for the perturbed observable problem under local surjectivity and small noise assumptions. 
In Section V, we show that VQEs affected by noise can be reformulated as noise-free VQEs with perturbed observables, such that the results from the previous two sections apply. 
In Section VI, we discuss specific cases where stronger statements can be made. 
Finally, in Section VII, we show and discuss numerical results that support our theory.

\section{Preliminaries}
In this section, we introduce the basic framework of VQAs (Section II.A).
We also discuss the formalism we use to model coherent and incoherent errors acting on the circuit (Section II.B). 

Throughout this paper, $n$ indicates the number of qubits in the circuit and $N=2^n$ the dimension of the Hilbert space. $\mathbb{H}^{N \times N}$ indicates the set of $N \times N$ Hermitian matrices, that is $H \in \mathbb{C}^{N \times N} : H=H^\dagger$. $SU(N)$ indicates the special unitary group of dimension $N$ and $\mathfrak{su}(N)$ indicates the special unitary algebra.
We denote by $\braket{A,B}=\Tr[A^\dagger B]$ the Hilbert-Schmidt inner product and by $[A,B]=AB-BA$ the commutator of $A$ and $B$.
We denote the distance between a set $Y$ and a point $x$ by $\dist(x, Y)= \min_{y \in Y} \|y-x\|$. 
Given a vector $v$, the infinity norm $\|v\|_\infty$ is defined as $\|v\|_\infty=\max_i |v_i|$.
Given a smooth function $f: X \rightarrow \mathbb{R}$ with a compact domain $X$, $f$ is uniformly upper bounded on $X$ and its derivatives up to any order are Lipschitz continuous. 
With $L_f^{(p)}$ we indicate the Lipschitz constant of the $p$-th derivative of $f$. With $M_f$ we indicate the upper bound.  

\subsection{Variational Quantum Algorithms}

In VQAs, we consider optimization problems of the form 
\begin{align}
   \min_{\mathbf{\theta} \in \Theta} \ell (\mathbf{\theta}) = \min_{\mathbf{\theta} \in \Theta} \Tr[O U(\mathbf{\theta})\rho_0 U^\dagger(\mathbf{\theta})] \label{nominalproblem},
\end{align}
where $\rho_0 \in \mathbb{H}^{N \times N}$ is an easy-to-prepare $N$-dimensional input state, $O \in \mathbb{H}^{N \times N}$ an observable.
Further, the parameter space $\Theta$ can be, for example, $\mathbb{R}^p$, or the $p$-dimensional torus $\mathbb{T}^p$.
The unitary map $U: \Theta \rightarrow SU(N)$ has the form 
\begin{align}
   U(\theta)=\prod_{j=1}^L U_j(\theta_j), \label{qcirc}
\end{align}
where each $U_j(\theta_j)$ is a unitary operator.
For compactness of representation, we define $\rho(\mathbf{\theta})= U(\mathbf{\theta})\rho_0 U^\dagger(\mathbf{\theta})$. 
Note that, for pure states, $\rho(\theta)$ can also be written as $\ket{\psi(\theta)}\bra{\psi(\theta)}$, where $\ket{\psi(\theta)} \in \mathbb{C}^N$ is a vector and $\bra{\psi(\theta)}=\ket{\psi(\theta)}^\dagger$. 
We call $\Theta^* \subset \Theta$ the set of globally optimal parameters for the problem \eqref{nominalproblem}. We also call $\Theta^\mathcal{C}$ the set of critical points of $\ell$, i.e. the points such that $\nabla \ell(\theta)=0$.

Typically, the unitaries are represented as 
\begin{align}
   & U_j(\mathbf{\theta}_j)=\prod_{k=1}^{p_j} e^{-i\theta_{j,k}H_{j,k}}
\end{align}
with $H_{j,k} \in \mathbb{H} ^{N \times N}$ for $k=1,..., p_j $ being the set of Hermitian generators for the $j$-th layer. Examples of architectures that satisfy this structure are QAOA \cite{Farhi2014}, the hardware-efficient ansatz \cite{Kandala2017} and the variational Hamiltonian ansatz \cite{Wiersema2020}.
Another possible structure for the layers is
\begin{align}
   & U_j(\mathbf{\theta}_j)=e^{-i A_j(\mathbf{\theta_j}) } \\
   & A_j(\mathbf{\theta}_j)=\sum_{k=1}^{p_j} \theta_{j,k} H_{j,k}.   
\end{align}
This is the $SU(N)$ ansatz as
for $p_j \geq N$ and suitable choice of $\{H_{j,k} | k=1,\dots, p_j\}$ it generates the special unitary group of dimension $N$ \cite{Wiersema2024}.

Throughout this paper, we assume without loss of generality that $\min_{\mathbf{\theta} \in \Theta} \ell (\mathbf{\theta})=0$. 
In case the minimum takes a different value $\ell^*$, the cost function can be shifted as $\ell'(\mathbf{\theta})=\ell(\theta)-\ell^*$ for our analysis. 

In order to find the minimum, we consider a gradient descent algorithm of the form
\begin{align}
   \theta^{k+1}=\theta^{k}-\gamma \nabla \ell(\theta^k), \label{gradientdescent}
\end{align}
where $\gamma>0$ is small enough to guarantee a decrease in the cost function. If $\ell$ is $L$-smooth (i.e. $\nabla \ell$ is $L$-Lipschitz), then $\gamma \le \frac{2}{L}$ satisfies this condition \cite{simonetto2025}. 
Notice that due to the non-convex nature of $\ell$ in general only convergence to a local minimum can be guaranteed.

In order to study convergence to a global minimum, it is convenient to consider the cost function as the composition of the map $U: \Theta \rightarrow SU(d)$ and a cost functional $L: SU(d) \rightarrow \mathbb{R}$.
In view of this, the partial derivatives of \eqref{nominalproblem} can be represented as \cite{Wiedmann2025}

\begin{align}
   \dfrac{\partial}{\partial \theta_j} \ell(\theta)=\braket{\grad L [U (\mathbf{\theta})], \Omega_j (\mathbf{\theta})},
\end{align}
with 
\begin{align}
   & \grad L[U(\theta)]=[O, \rho(\mathbf{\theta})] \text{, and}\\
   & \Omega_j(\mathbf{\theta})= U^\dagger(\mathbf{\theta}) \dfrac{\partial}{\partial \theta_j} U(\mathbf{\theta}).
\end{align}
Here, $\grad L[U(\mathbf{\theta})]$ is the Riemannian gradient of the cost functional $L$ and $\Omega_j(\mathbf{\theta})$ the rate of change of the unitary with respect to the variation of the parameter. 
Both vectors are elements of the tangent space at the identity of $SU(N)$.

In this work, we will focus on locally surjective parameterized unitaries. 
\begin{definition}
   A parameterized unitary transformation $U$ is said to be locally surjective if 
   \begin{align}
      \mathrm{span}\{\Omega_j(\mathbf{\theta}) | j=1,\dots, p\}=\mathfrak{su}(N),  \forall \mathbf{\theta} \in \Theta.
   \end{align} 
\end{definition}

In other words, local surjectivity guarantees that it is always possible to move along any direction of the special unitary group manifold by varying some parameter $\theta_j$. 
An important consequence of a locally surjective parametrized unitary is that one can derive global convergence guarantees.
We state here the main theorem from \cite{Wiedmann2025}.   
\begin{theorem} 
   Let $U$ be a locally surjective parametrized unitary. Then, for almost all initial conditions $\theta_0$, using gradient descent $\eqref{gradientdescent}$, the parameter $\theta^k$ either converges to a global minimum of $\ell(\theta)$ or diverges as $k\rightarrow\infty$. 
\end{theorem}
A proof for this theorem as well as examples of locally surjective unitary transformations can be found in \cite{Wiedmann2025}.
Assuming additionally compactness of the parameter space convergence to a global minimum can be guaranteed. It has to be noted, however, that the construction of a locally surjective and compact ansatz is still an open problem. 
\\

It is important to remark that the solution to the optimization problem \eqref{nominalproblem} is in general non-unique. This could be both due to a degenerate ground state of the observable $O$
or non-injectivity of the Ansatz.
To characterize optimal solutions for VQEs, we will later focus our attention on cases in which the cost $\ell$ is a Morse-Bott function, which is defined as follows \cite[Def 2.6.1]{Nicolaescu2011}.
\begin{definition}
   Let $\mathcal{M}$ be a smooth manifold and $f: \mathcal{M} \rightarrow \mathbb{R}$ be a smooth function. Then $f$ is said to be
   Morse-Bott if:
   \begin{itemize}
      \item[1.] The set of critical points of $f$ is a disjoint union of connected, smooth submanifolds $\mathcal{M}_k$ such that
      $f$ is constant on each component, and
      \item[2.] for each $p \in \mathcal{M}_k$, the kernel of the Hessian of $f$ at $p$ is exactly the tangent space $T_p \mathcal{M}_k$.
   \end{itemize}
\end{definition}
This can be understood as a function whose critical points are found in manifolds and whose local behavior is quadratic in directions 
transverse to the critical set itself. In this sense, it can be interpreted as a set-valued generalization of isolated critical points with nondegenerate Hessian, allowing flat
directions tangential to the critical manifold. 

\subsection{Quantum Noise}
In the following, we introduce the noise sources we deal with in this work. 
We start by defining coherent errors. 

Given an ideal quantum circuit \eqref{qcirc},  a coherent error $U_{\e,j}=e^{-i H_{\e,j}}$ is a unitary operator such that 
the resulting, noise-affected quantum circuit reads
\begin{align}
   \tilde{U}(\theta)=\prod_j^L \tilde{U_j}(\theta_j)= \prod_i^L U_j(\theta_j) U_{\e,j}.
\end{align}
The generators $H_{\e,j}$ are in general not known, and can be different for each layer $j$. It is however assumed that they are elements of a known set $\mathcal{H}_{\e, j}$.
For example, $U_{\e,j}$ could be a single-qubit Pauli rotation with unknown but bounded angle $U_{\e,j} \in \{e^{-i\eta X} | \eta \in [\underline\eta, \bar \eta]\}$.

We proceed by introducing incoherent quantum errors. 
An ideal quantum circuit \eqref{qcirc} affected by incoherent errors is described by means of the quantum operation
\begin{align}
   \mathcal{E}=\mathcal{E}_{L} \circ \mathcal{E}_{L-1} \circ \dots \circ \mathcal{E}_{1}, 
\end{align}
such that each layer acts on a density operator $\rho$ as  
\begin{align}
   \mathcal{E}_{j}(\rho)=\sum_{k=1}^{m} E_k U_j(\theta_j) \rho U^\dagger_j(\theta_j) E^\dagger_k,
\end{align}
for some set of evolution operators $E_k$. Quantum operations can model, for example, Pauli channels, depolarization and amplitude damping. An overview of incoherent errors can be found in \cite[Chapter 8]{Nielsen2012}.     
\\

\section{Noise Resilience of Optimal Parameters}
In the following, we consider the perturbed VQE 
\begin{align}
   \min_{\theta \in \Theta} \tilde{\ell} (\theta) = \min_{\theta \in \Theta} \Tr[(O+\epsilon \tilde{O}(\theta)) U(\mathbf{\theta}) \rho_0 U(\mathbf{\theta})^\dagger] \label{perturbedobs}
\end{align}
The main difference to the noise-free VQE \eqref{nominalproblem} is the perturbation of the observable via the term $\epsilon\tilde{O}(\theta)$.
In Sections III and IV, we derive noise resilience and robust convergence guarantees for such perturbed observable VQEs \eqref{perturbedobs}.  
In Section V, we will derive explicit expressions for $\tilde{O}(\theta)$ for common noise classes, including both coherent and incoherent errors. 
As we will see, the observable will in general depend on the parameters $\theta$. 
We assume in the following that this dependence is smooth, which is fulfilled for all noise classes considered in this paper (see Section V for details). 
The perturbation level $\epsilon >0$ is a quantification of the noise intensity. 
We indicate with $\tilde{\Theta}_\epsilon^*$ the set of optimal solutions of \eqref{perturbedobs} for a given $\epsilon$. 

\subsection{General Upper Bound}
In this setup, we can formulate our main result.
\begin{theorem}
   Suppose $\Theta$ is compact.
   Then, there exist constants $C, \bar\epsilon>0$ and $\alpha \in (0,1)$ such that
\begin{align}
   \mathrm{dist}({\tilde{\theta}^*_\epsilon, \Theta^*})\le C \epsilon^{\frac{1-\alpha}{\alpha}},\label{ebound}
\end{align} 
for any  $\epsilon < \bar\epsilon$ and $\tilde{\theta}_\epsilon^* \in \tilde{\Theta}_\epsilon^*$.
\end{theorem}

\begin{proof}
The proof is structured in two steps.   
The first step of the proof is showing that $\tilde\theta^*_\epsilon$ is in a neighborhood of $\Theta^*$ for $\epsilon$ sufficiently small. 
The second step provides the explicit bound \eqref{ebound}. 

\textbf{Step 1 - Continuity bound:}
We start by defining a sequence $\{\epsilon_k \}_k$ such that $\epsilon_k \rightarrow 0$. 
We then define the set of optimal parameters $\tilde{\Theta}_{\epsilon_k}^*$ for the perturbed cost $\tilde{\ell}_k(\theta)=\Tr[(O+\epsilon_k \tilde{O}(\theta)) U(\mathbf{\theta}) \rho_0 U(\mathbf{\theta})^\dagger] $ 
for a specific element of the sequence $\epsilon_k$ via
\begin{align}
   \tilde{\Theta}_{\epsilon_k}^* = \argmin \ell(\theta) + \epsilon_k g(\theta),
\end{align}
with $g(\theta) = \Tr[\tilde{O}(\theta) U(\theta) \rho_0 U^\dagger(\theta)]$. 

It is immediate to see that $\Theta^*_0=\Theta^*$. Let $\tilde{\theta}_{\epsilon_k}^* \in \tilde{\Theta}_{\epsilon_k}^*$ be an arbitrary perturbed optimal parameters and $\theta^* \in \Theta^*_0$  be an arbitrary optimal parameter for the noise-free cost $\ell(\theta)$.
By optimality, we have 
\begin{align}
   \ell(\tilde{\theta}_{\epsilon_k}^*) + \epsilon_k g(\tilde{\theta}_{\epsilon_k}^*) \le \ell(\theta^*) + \epsilon_k g(\theta^*).
\end{align}
By observing that $\ell(\theta^*)=0$ and rearranging the terms, we get
\begin{align}
   & \ell(\tilde{\theta}_{\epsilon_k}^*) \le \epsilon_k(g(\theta^*) - g(\tilde{\theta}_{\epsilon_k}^*)) \nonumber \\
   &\le \epsilon_k (|g(\theta^*)| + |- g(\tilde{\theta}_{\epsilon_k}^*)|) \nonumber \\
   &\le 2 \epsilon_k M_g,
\end{align}
where we used boundedness of $g$ on $\Theta$.

Taking the limit, we get 
\begin{align}
   \lim_{k \rightarrow \infty} \ell(\tilde{\theta}_{\epsilon_k}^*) \le \lim_{k \rightarrow \infty} 2 \epsilon_k M_g = 0. 
\end{align}
Compactness of $\Theta$ guarantees the existence of at least one accumulation point $\bar\theta \in \Theta$ for the sequence $\{\tilde{\theta}^*_{\epsilon_k}\}_k$. 
With this fact and by continuity of $\ell$ we can state that $\ell(\bar\theta) = 0$, and therefore $\bar\theta \in \Theta^*_0$. 
This shows that $\tilde{\theta}^*_\epsilon$ approaches the set $\Theta^*_0$ asymptotically as $\epsilon$ decreases to $0$, meaning that, for $\epsilon$ small enough, $\tilde{\theta}_{\epsilon}^*$ lies arbitrarily close to $\Theta^*$.

\textbf{Step 2 - Explicit bound scaling:} In this step, we derive the explicit bound \eqref{ebound}. 
By the first order optimality condition, we have 
\begin{align}
   \nabla\ell(\tilde{\theta}^*_\epsilon)=-\epsilon \nabla g(\tilde{\theta}^*_\epsilon). \label{nablas} 
\end{align}
Since $\ell$ is a real analytic function, we can now use the Kurdyka-\L{}ojasiewicz inequalities.
In particular \cite[Corollary 2]{Kurdyka2014} states that there exist constants $K, r_0>0$ and $\alpha \in (0,1)$ such that
\begin{align}
   \| \nabla \ell(\tilde{\theta}^*)\|  \ge {K} \dist({\tilde{\theta}^*_\epsilon, \Theta^*})^{\frac{\alpha}{1-\alpha}} \label{KL} 
\end{align}
holds if $\dist({\tilde{\theta}^*_\epsilon, \Theta^*})<r_0$, that is, if $\tilde{\theta}^*_\epsilon$ is in a neighborhood of $\Theta^*$.
By the first step of the proof, for each $r>0$ we can always find $\hat\epsilon(r)$ such that $\dist(\tilde{\theta}_{\hat\epsilon(r)}^*, \Theta^*) < r$.  
We define $\bar{\epsilon}=\hat\epsilon(r_0)$.
If $\epsilon < \bar\epsilon$, we can manipulate \eqref{KL} to get
\begin{align}
   \dist({\tilde{\theta}^*_\epsilon, \Theta^*}) \le (\frac{1}{K} \| \nabla \ell(\tilde{\theta}^*)\|^{\frac{1-\alpha}{\alpha}}). \label{LK} 
\end{align}

Then, substituting \eqref{nablas} in \eqref{LK}, we obtain
\begin{align}
   & \dist({\tilde{\theta}^*_\epsilon, \Theta^*}) \le (\frac{1}{K} \| -\epsilon \nabla g(\tilde{\theta}^*)\|)^{\frac{1-\alpha}{\alpha}} \nonumber \\
   & = (\frac{1}{K} \| \nabla g(\tilde{\theta}^*)\|)^{\frac{1-\alpha}{\alpha}}\epsilon^{\frac{1-\alpha}{\alpha}} \nonumber \\
   & \le (\frac{1}{K} (M_{\| \nabla g\|}))^{\frac{1-\alpha}{\alpha}}\epsilon^{\frac{1-\alpha}{\alpha}} \nonumber \\
   & = C \epsilon^{\frac{1-\alpha}{\alpha}}, \label{bound}
\end{align}
This proves the theorem statement. 
\\

\end{proof}
\begin{figure}
   \includegraphics[width=\columnwidth]{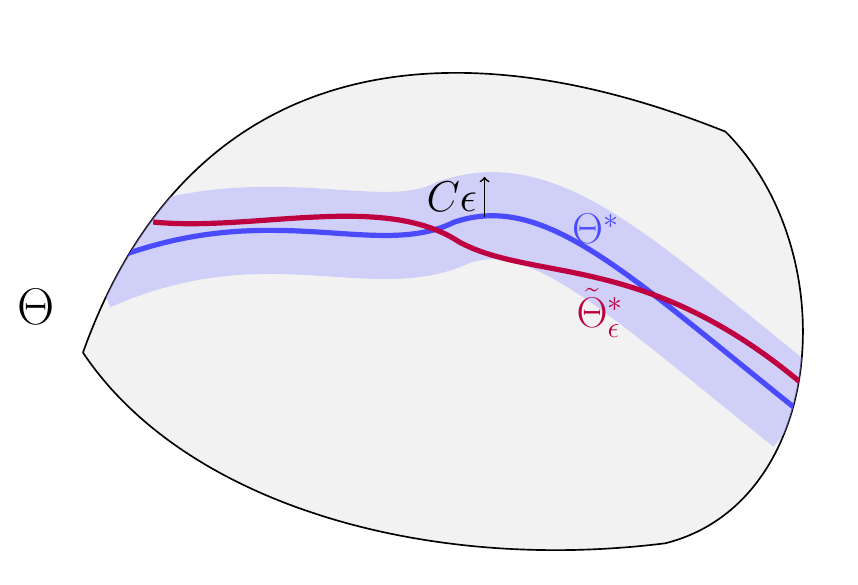}
   \caption{Visual representation of the set of unperturbed optimal parameters $\Theta^*$ and the set $\tilde{\Theta}^*_\epsilon$. The shaded area around $\Theta^*$ indicates the region in which $\tilde{\Theta}^*_\epsilon$ can lie according to our upper bound.}
   \label{sets}
\end{figure}

In Theorem 2 we derive a worst-case upper bound for the distance $\dist({\tilde{\theta}^*_\epsilon, \Theta^*})$ between solutions of \eqref{nominalproblem} and solutions of \eqref{perturbedobs} for small perturbations. 
In particular we show show that, for small perturbations, we can derive a polynomial scaling 
$\mathrm{dist}({\tilde{\theta}^*_\epsilon, \Theta^*})\le C \epsilon^{\frac{1-\alpha}{\alpha}}$. 

We prove the first part by providing an upper bound on the variation of the cost function $\ell(\theta^*_\epsilon)$,
and using a continuity argument and compactness of the parameter set 
to show that the perturbed optimal parameter lies in a neighborhood of the noise-free optimal parameter set.
The situation is illustrated in Figure~\ref{sets}.

The proof relies on \L{}ojasiewicz inequalities, which are used to characterize the behavior of analytic functions close to critical sets 
and are widely employed in optimization theory \cite{Colding2014}.
It is important to notice that such inequalities allow one to make local statements around minima only, characterizing how they shift in the parameter space due to the noise. The upper bound on the noise $\bar\epsilon$ is exactly due to this: while the \L{}ojasiewicz inequalities
allow us to characterize how the optimization landscape is perturbed locally around critical points, we need noise to be small enough that no new critical points appear far from the noise-free ones. The existence of the noise upper bound $\bar\epsilon$ is indeed guaranteed by compactness of the parameter space, 
which implies that the minimum norm of the gradient has a lower bound far from critical points. 
The \L{}ojasiewicz exponent $\alpha$ in \eqref{KL} gives information on the conditioning of a generic (possibly non-convex) optimization problem near the set of minimizers. In particular, it is related to the exponent of the polynomial growth of the function around its minimum. 
In the nondegenerate case (the cost function having an isolated minimum with a positive definite Hessian or satisfying a quadratic growth condition around optimal values) the landscape is locally strongly convex with $\alpha=\frac{1}{2}$.
Theorem~2 shows that, in this case perturbations of the cost induce proportional perturbations on the optimal parameters.
On the other hand, if the optimization landscape is flat around the optimal parameter set $\theta^*$, we have that $\alpha >\frac{1}{2}$ \cite{Feehan2020}. In this case, it follows that ${\frac{1-\alpha}{\alpha}} <1$, 
which implies that the slope of $\epsilon^{\frac{1-\alpha}{\alpha}}$ at 0 increases. Therefore, the sensitivity of errors in the parameter space to perturbations is higher. 
For example, problem-inspired architectures that rely on symmetries in the Hamiltonian to build the parametrized circuit can admit higher noise sensitivity due to a poor conditioning of the cost function around the solution.   

\subsection{Conditions for Linear Scaling}
We proceed by characterizing cases that guarantee a linear scaling in $\epsilon$ of the worst-case parameter error. 
This property turns out to be related to quadratic growth around the optimal parameter set $\Theta^*$, which is the guarantee that there exist $\mu, r>0$ such that $\ell(\theta) \ge \mu \, \dist(\theta, \Theta^*)^2$ for $\dist(\theta, \Theta^*) < r$. 
In order to do so, we introduce the quantum geometric tensor $F(\theta)$ defined via 
\begin{equation}
   F_{i,j}(\theta)=\bra{\frac{\partial}{\partial\theta_i}\psi(\theta)} I-\rho(\theta) \ket{\frac{\partial}{\partial\theta_j} \psi(\theta)},
\end{equation}
for $i,j \in \{1,2,...,p\}$. The quantum geometric tensor is the real part of the quantum Fischer information matrix, and it can be understood as the Fubini-Study metric on the density operator manifold \cite{cheng2013}. 
Roughly speaking, it indicates the sensitivity of the density operator to changes in the parameters. A zero eigenvalue in $F(\theta)$ implies a constant direction in the parameter space.
The following corollary provides an account of cases in which we can guarantee quadratic growth around $\Theta^*$, thereby guaranteeing linear scaling in the error upper bound. 
\begin{corollary}
   Suppose $\Theta$ is compact.
   Suppose additionally that at least one of the following two conditions holds: 
   \begin{itemize}
      \item $U$ is locally surjective, or 
      \item the observable $O$ is nondegenerate and the quantum geometric tensor has full rank for each $\theta \in \Theta$.
   \end{itemize} 
   Then there exist constants $C, \bar\epsilon>0$ such that $\mathrm{dist}({\tilde{\theta}^*_\epsilon, \Theta^*})\le C \epsilon,$ 
for any  $\epsilon < \bar\epsilon$ and $\tilde{\theta}_\epsilon^* \in \tilde{\Theta}_\epsilon^*$.
\end{corollary} 

\begin{proof}
   Assuming local surjectivity, \cite[Appendix C]{Wiedmann2025} proves that $\ell (\theta)$ is a Morse-Bott function. This guarantees the exponent $\alpha=\frac{1}{2}$ \cite{Feehan2020}. 

Assuming full rank of the quantum geometric tensor and nondegeneracy of $O$, \cite{Zunkovic2026} shows directly that the KL condition \eqref{KL} is satisfied with $\alpha=\frac{1}{2}$.
In both cases the result then follows from \eqref{bound}.  

\end{proof}
The linear scaling $C \epsilon$ is the best-behaved we can have, as for small errors it leads to the smallest possible sensitivity.
The local surjectivity condition can only be satisfied with a number of parameter $p\ge\text{dim}(SU(N))=4^n-1$, while the quantum geometric tensor rank condition can only be satisfied for underparametrized unitaries $p \le N=2^n$. 
Interestingly, we have an asymmetry in the conditions stated in the corollary: 
while we impose a local surjectivity condition on $U$, which in turn implies local surjectivity of $\rho(\theta)$, 
the condition on the quantum geometric tensor requires directly that $\rho(\theta)$ is locally injective, with no general condition on the parametrized unitary map $U$. 
The question remains open whether it is possible to guarantee linear scaling on the error assuming only local surjectivity of $\rho(\theta)$ without surjectivity of the unitary.   
This would fill the gap for linear scaling of the parameter error in the perturbation level for a number of parameters $2^n < p <4^n-1$. 
\\

\section{Robust convergence guarantees}
In Section III, we have shown that the optimal parameters of the perturbed problem \eqref{perturbedobs} differ from the noise-free optimal parameters of \eqref{nominalproblem} with an upper bound that increases with the perturbation level.
Next, we investigate asymptotic convergence of the perturbed observable VQE \eqref{perturbedobs}.
In particular, we show that such a VQE with locally surjective parametrized unitary on a compact parameter set converges asymptotically to its global minimum if the perturbation level is small enough.
 
\begin{theorem}
   Suppose $\Theta$ is compact and $U$ is locally surjective.
   Then, there exists $\bar \epsilon>0$ such that, if $\epsilon < \bar \epsilon$, using gradient descent $\eqref{gradientdescent}$, the parameter $\theta^k$ converges to a global minimum of $\tilde\ell(\theta)$ as $k\rightarrow\infty$, for almost all initial conditions $\theta_0$. 
\end{theorem}

\begin{proof}
   The main idea of the proof is to show that, for $\epsilon$ small enough, the index of each singular point of \eqref{perturbedobs} remains the same as the noise-free problem \eqref{nominalproblem}, showing in particular that no saddles of the noise-free problem turn into local minima because of the perturbation. 
   The result then follows from noise-free convergence results given in Theorem 1 and compactness of $\Theta$. 
   In order show that saddles are preserved, we first bound the distance between the noise-free critical points and the perturbed ones. This result is a generalization of the upper bound in the previous section to general critical sets. Then we use this distance bound to bound the eigenvalue variation of the Hessians. 
   
   \textbf{Step 1 - Parameter variation} 
   First, we define the set of perturbed critical points $\Theta^\mathcal{C}_{\epsilon}$ as the set  $\{\theta \in \Theta | \nabla \tilde{\ell}(\theta)=0\}$, for $\tilde \ell$ defined as in \eqref{perturbedobs}.
   Next, we focus our attention on a perturbed critical parameter $\theta^\mathcal{C}_\epsilon \in \Theta^\mathcal{C}_{\epsilon}$ and its projection on the noise-free critical set $\theta^\mathcal{C} \in \Theta^\mathcal{C}$.
   By definition, for this pair $\dist(\theta^\mathcal{C}_\epsilon, \Theta^\mathcal{C})=\|\theta^\mathcal{C} - \theta^\mathcal{C}_\epsilon\|$.
   To bound this distance, consider \eqref{nablas}. Taking a first-order expansion of the left-hand term yields
   \begin{align}
      \nabla \ell(\theta^\mathcal{C}_\epsilon)=\nabla^2 \ell(\theta^\mathcal{C})(\theta^\mathcal{C}_\epsilon-\theta^\mathcal{C}) + O(\|\theta^\mathcal{C} - \theta^\mathcal{C}_\epsilon\|^2),
   \end{align}  
   since the zeroth-order term $\nabla \ell (\theta^\mathcal{C})=0$. 
   Since all critical manifolds considered are either global extrema or strict saddles, and $\theta^\mathcal{C} - \theta^\mathcal{C}_\epsilon$ only has components on the 
   transverse subspace, 
   \begin{align}
      & \|\nabla\ell(\theta^\mathcal{C}_\epsilon)\| = \|\nabla^2 \ell(\theta^\mathcal{C})(\theta^\mathcal{C}_\epsilon-\theta^\mathcal{C}) + O(\|\theta^\mathcal{C} - \theta^\mathcal{C}_\epsilon\|^2)\| \\   
      & \ge \mu\|\theta^\mathcal{C} - \theta^\mathcal{C}_\epsilon\| - O(\|\theta^\mathcal{C} - \theta^\mathcal{C}_\epsilon\|^2), \label{perpnabla}
   \end{align}  
   
   where $\mu \in \sigma(\nabla^2 \ell (\theta^\mathcal{C}))$ is the smallest non-zero singular value of the Hessian evaluated at $\theta^\mathcal{C}$. 
   Substituting \eqref{perpnabla} in \eqref{nablas} we get 

   \begin{align}
      & \|\nabla g(\theta^\mathcal{C}_\epsilon)\| \epsilon \ge \mu\|\theta^\mathcal{C} - \theta^\mathcal{C}_\epsilon\| - O(\|\theta^\mathcal{C} - \theta^\mathcal{C}_\epsilon\|^2)\\
      & \ge \mu\|\theta^\mathcal{C} - \theta^\mathcal{C}_\epsilon\| -   \frac{L_\ell^{(1)}}{2}\|\theta^\mathcal{C} - \theta^\mathcal{C}_\epsilon\|^2,
      \end{align}
   where we used $L$-smoothness of $\ell$ to upper bound the higher-order terms \cite[Lemma 1.2.3]{Nesterov2004}.
   Rearranging and defining $\delta:=\|\theta^\mathcal{C} - \theta^\mathcal{C}_\epsilon\|$, we get the condition
   \begin{align}
      & \frac{L_\ell^{(1)}}{2}\delta^2  -\mu \delta + \|\nabla g(\theta^\mathcal{C}_\epsilon)\| \epsilon \ge 0. \label{quadtheta}   
   \end{align}
   The roots of the polynomial on the left-hand side of \eqref{quadtheta} are located in $\delta_{1,2}= \frac{\mu \pm \sqrt{\mu^2 - 2 L_\ell^{(1)} \|\nabla g(\theta^\mathcal{C}_\epsilon)\|\epsilon }}{L_\ell^{(1)}}$. 
   The roots are real if 
   \begin{align}
      \mu^2 - 2 L_\ell^{(1)} \|\nabla g(\theta^\mathcal{C}_\epsilon)\|\epsilon \ge 0,
   \end{align}
   which is satisfied for $\epsilon \le \frac{\mu^2}{2L M_{\|\nabla g\|}}=:{\epsilon}_\Delta$.
   The inequality \eqref{quadtheta} is then satisfied for 
   \begin{align}
   \delta \in [0,\delta_1] \cup [\delta_2, +\infty). \label{range}
   \end{align} 
   The geometric interpretation of this union of two intervals is that the $L$-smoothness of the cost forces singular points of \eqref{perturbedobs} to be either shifted versions of the noise-free ones lying in a neighborhood or new points generated far enough from the noise-free ones. 
   
   We focus now on the left branch of solutions $[0, \delta_1]$. 
   The admissible range of solutions in this branch is 
   \begin{align}
      &\|\theta^\mathcal{C} - \theta^\mathcal{C}_\epsilon\|= \delta \le \delta_1 = \frac{\mu - \sqrt{\mu^2 - 2 L_\ell^{(1)} \|\nabla g(\theta^\mathcal{C}_\epsilon)\|\epsilon }}{L_\ell^{(1)}} \\
      &=  \frac{\mu - \sqrt{\mu^2 - 2 L_\ell^{(1)} \|\nabla g(\theta^\mathcal{C}_\epsilon)\|\epsilon }}{L_\ell^{(1)}}
      \frac{{\mu + \sqrt{\mu^2 - 2 L_\ell^{(1)} \|\nabla g(\theta^\mathcal{C}_\epsilon)\|\epsilon }}}{{\mu + \sqrt{\mu^2 - 2 L_\ell^{(1)} \|\nabla g(\theta^\mathcal{C}_\epsilon)\|\epsilon}}} \\
      &= \frac{2\|\nabla g(\theta^\mathcal{C}_\epsilon)\|\epsilon}{{\mu + \sqrt{\mu^2 - 2 L_\ell^{(1)} \|\nabla g(\theta^\mathcal{C}_\epsilon)\|\epsilon}}}
      \le \frac{2\|\nabla g(\theta^\mathcal{C}_\epsilon)\|\epsilon}{{\mu}}\\ &\le \frac{M_{\|\nabla g\|}}{{\mu}}\epsilon :=L_\theta \epsilon,  \label{distancebound}
   \end{align}
   using the fact that $\|\nabla g\|$ is bounded on $\Theta$.
   The next step is to derive an upper bound on $\epsilon$ such that there is no solution in the right branch $[\delta_2, +\infty)$. 
   This way,  the worst-case upper bound on the parameter distance $\|\theta^\mathcal{C} - \theta^\mathcal{C}_\epsilon\| \le L_\theta \epsilon$ is guaranteed to hold.
   To exclude that such points are generated, one can use the compactness of the parameter space $\Theta$. 
   Compactness of $\Theta$ guarantees that there exist $m, r>0$ such that $\dist(\theta, \Theta^\mathcal{C})>r$ implies $\|\nabla \ell(\theta)\|> m$. 
   In combination with \eqref{nablas} we get that $\dist(\theta_\epsilon, \Theta^*)>r$ implies $\epsilon\|\nabla g(\theta)\|\ge m$. 
   This directly implies that, if $\dist(\theta, \Theta^*)>r,$ then $\epsilon \ge \frac{m}{M_{\|\nabla g\|}}=: \epsilon_r$. 
   Therefore, if $\epsilon<\epsilon_r,$ there is no  $\theta_\epsilon \in \Theta^*_\epsilon$ such that $\dist(\theta_\epsilon, \Theta^*)>r$. 
   By imposing $r \le \frac{\mu}{L_\ell^{(1)}}$, we implicitly show that the parameter distance cannot be within the right branch of \eqref{range}.
   Since \eqref{distancebound} holds in the left branch of $\eqref{range}$, this gives us an upper bound on the parameter distance that is linear in $\epsilon$ for all $\epsilon \le \min\{\epsilon_r,\epsilon_\Delta\}$.
   \\

   \textbf{Step 2 - Eigenvalue variation}
   We proceed now by upper bounding the Hessian eigenvalue variation due to the noise. 
   We start using Weyl's inequality, namely
    
   \begin{equation}
      |\lambda_i(\theta^\mathcal{C}) -\tilde{\lambda}_i(\theta^\mathcal{C}_\epsilon)| \leq \| \nabla^2 \ell (\theta^\mathcal{C}) - \nabla^2 \tilde{\ell}(\theta^\mathcal{C}_{\epsilon})\| \label{weyl},
   \end{equation}
   where $\lambda_i(\theta^\mathcal{C}) \in \sigma(\nabla^2 \ell (\theta^\mathcal{C}))$ is the $i$-th eigenvalue of the noise-free Hessian in $\theta^\mathcal{C}$, and 
     $\tilde\lambda_i(\theta^\mathcal{C}_\epsilon) \in \sigma(\nabla^2 \tilde\ell (\theta^\mathcal{C}_\epsilon))$ is the $i$-th eigenvalue of the perturbed Hessian in $\theta^\mathcal{C}_\epsilon$.
   
   Now we upper bound the Hessian distance as
   \begin{align}
      \nonumber & \| \nabla^2 \ell (\theta^\mathcal{C}) - \nabla^2 \tilde{\ell}(\theta^\mathcal{C}_{\epsilon})\| \\
      \nonumber & = \| \nabla^2 \ell (\theta^\mathcal{C}) - \nabla^2 \ell(\theta^\mathcal{C}_{\epsilon}) - \epsilon \nabla^2 g(\theta^\mathcal{C}_{\epsilon})\| \\
      \nonumber & \leq \|\nabla^2 \ell (\theta^\mathcal{C}) - \nabla^2 \ell(\theta^\mathcal{C}_{\epsilon})\| + \epsilon \|\nabla^2 g(\theta^\mathcal{C}_{\epsilon})\| \\
      & \leq L_\mathrm{H} \|\theta^\mathcal{C} - \theta^\mathcal{C}_{\epsilon}\| + \epsilon M_g^{(2)}, \label{partialbound}
   \end{align} 
   with $L_\mathrm{H}:= L_\ell^{(2)}$, since $\nabla^2 \ell$ is Lipschitz-continuous on $\Theta$. A precise derivation of the Lipschitz constant $L_\mathrm{H}$ can be found, for example, in \cite{Zunkovic2026}.
   
   Then by combining \eqref{distancebound}--\eqref{partialbound}we get 
   \begin{align}
      & |\lambda_i(\theta^\mathcal{C}) -\tilde{\lambda}_i(\theta^\mathcal{C}_\epsilon)| \leq L_\mathrm{H} L_\theta \epsilon + M_g^{(2)} \epsilon = (L_\mathrm{H} L_\theta + M_g^{(2)})\epsilon.
   \end{align}

   Next, we show that for small $\epsilon$ negative eigenvalues do not change sign. To do so, observe that
   \begin{align}
      & \tilde\lambda_i(\theta^\mathcal{C}_\epsilon)= \lambda_i(\theta^\mathcal{C}) + \tilde\lambda_i(\theta^\mathcal{C}_\epsilon) - \lambda_i(\theta^\mathcal{C}) 
      \\ & \le \lambda_i(\theta^\mathcal{C}) + |\tilde\lambda_i(\theta^\mathcal{C}) - \lambda_i(\theta^\mathcal{C})| 
      \le \lambda_i(\theta^\mathcal{C}_\epsilon) + (L_\mathrm{H} L_\theta + M_g^{(2)})\epsilon. \label{eigbound}
   \end{align}
   Using that $\lambda(\theta^\mathcal{C}_i)$ is negative, imposing the right-hand side of \eqref{eigbound} to be negative is equivalent to
   \begin{align}
      (L_\mathrm{H} L_\theta + M_g^{(2)})\epsilon < |\lambda(\theta^\mathcal{C})|.  \label{eigcondition}
   \end{align}  

   In particular, in order to guarantee that saddles do not degenerate into local minima, we focus on the negative eigenvalue closest to $0$ for each saddle submanifold. 
   Under the assumption of local surjectivity, \eqref{nominalproblem} has $n_{eig}$ critical sets $\Theta^i$, where $n_{eig}$ is the number of distinct eigenvalues of $O$, of which $n_{eig}-2$ are saddles \cite{Wiedmann2025}. 
   We name the union of the saddle sets $\Theta^\mathcal{S}$. Notice $\Theta^\mathcal{C} \subset \Theta^\mathcal{S}$.
   Further, at each point in the parameter set $\theta \in \Theta$, the Hessian $\nabla^2 \ell (\theta)$ has $p$ real eigenvalues $\lambda \in \sigma(\nabla^2 \ell (\theta))$. 
   We define $\sigma^-(\nabla^2 \ell (\theta)) \subset \sigma(\nabla^2 \ell (\theta))$ as the subset which only contains eigenvalues $\lambda<0$.
   Among these, we find one with the smallest possible distance to $0$ as 
   \begin{equation}
      \bar\lambda = \min_{\substack{\theta^\mathcal{S} \in \Theta^S,\\ \lambda \in \sigma^-(\nabla^2\ell(\theta^\mathcal{S}))}} |\lambda(\theta^\mathcal{S})|.
   \end{equation}
   Note that this minimum is attained as $\lambda$ is a continuous function in $\theta$ and the parameter space $\Theta$ is compact. 
   If \eqref{eigcondition} holds for $\bar\lambda$ then it holds for every other negative eigenvalue $\lambda(\theta^\mathcal{S})\in \sigma^-(\nabla^2\ell(\theta^\mathcal{S}))$ for each $\theta^\mathcal{S} \in \Theta^\mathcal{S}$.
   
   One can see that, if $\epsilon$ satisfies   
      \begin{align}
      \epsilon < {\epsilon}_\lambda=\frac{\bar\lambda}{L_\mathrm{H} L_\theta + M_g^{(2)}},
   \end{align}
   then \eqref{eigcondition} is satisfied for each $\theta^\mathcal{C} \in \Theta^\mathcal{C}$.
   Taking $\epsilon<\bar\epsilon:=\min \{\epsilon_\lambda, \epsilon_\Delta, \epsilon_r\}$ the index is preserved for each critical submanifold.
   Therefore, no local minimum is generated.
   We then proceed with the same proof strategy as for Theorem~1 \cite{Wiedmann2025}. 
   Since all singular points remain either saddles or global extrema and gradient descent avoids saddles, the algorithm either converges to a global minimum $\theta^* \in \Theta^*$ for almost all initial conditions or diverges.
   Finally, divergence is avoided as the parameter space is assumed to be compact.

\end{proof}
Theorem 3 shows that robust convergence of an error perturbed VQE \eqref{perturbedobs} can be guaranteed if the unitary map $U$ is locally surjective and the perturbation is small. 
The proof works in a similar fashion to the noise-free case, by guaranteeing that all critical point that are not global extrema are saddles. We do so by bounding the variation of the Hessian at critical points, accounting both for the variation of the cost function itself and that of the critical point. 
Notice that the proof ensures the existence of an upper bound on the perturbation level $\bar\epsilon$ under potentially conservative assumptions: it only uses worst case estimates and it guarantees that no single eigenvalue of the Hessians can change sign. 
The convergence condition would actually be satisfied as long as at least one of the negative eigenvalues for each saddle remains negative. 
Also, the proof does not rule out that the noise breaks symmetries and splits degenerate submanifolds into lower dimentional separated critical submanifolds. Theorem~3 guarantees that, if this happens, 
these remain saddles as the sign of the other eigenvalues is preserved.  
\\   

Taken together, the Theorems 2 and 3 show that, assuming local surjectivity of the parametrized unitary and for small perturbations on the observable, we can guarantee robust convergence of the perturbed VQE to a neighborhood of the noise-free solution, with radius proportional to the perturbation level itself. 
This result extends asymptotic convergence guarantees for the noise-free problem to a more realistic setup. Although small perturbation levels $\epsilon$ need to be guaranteed, previously there was no evidence that convergence is preserved under the action of any noise process.

\section{Converting Noise into Perturbed Observables}
In the following, we derive a way to represent noise processes acting on the VQE as perturbations on the observable.
Using this, all the results from sections III and IV can be applied to VQEs perturbed by noise. 
Subsection V.A shows how to represent general coherent noise as a perturbed observable, and subsection V.B does the same for incoherent noise.  
\subsection{Coherent noise}
In this subsection, we show how to represent coherent errors as perturbed observables. 
A general coherent-error-perturbed VQE reads
\begin{align}
    \min_{\theta \in \Theta} \Tr[O \tilde{U}(\mathbf{\theta, \eta})\rho_0\tilde{U}^\dagger(\mathbf{\theta, \eta})], \label{coherentperturbedproblem}
\end{align} 
where 
\begin{align}
   \tilde{U}(\mathbf{\theta, \eta})=\prod_{j=1}^L \tilde{U}_j (\theta_j, \eta_j)=\prod_{j=1}^L U_{\rm{e},j}(\eta_j)U_j(\theta_j),
\end{align}
with $U_{\rm{e},j}(\eta_j)=e^{-i\eta_j H_{\mathrm{e},j}}$ for some Hermitian generator $H_{\mathrm{e},j} \in \mathbb{H}^{N \times N}$, $\eta \in \mathbb{R}^p$, 
and $U(\theta_j)$ is the noise-free parameterized gate as described in subsection II.A. 
In other words, this error models the action of an unknown unitary gate after each layer of the VQA.

In the following, we show that an error-perturbed VQE can be represented as a perturbed observable problem such that the results from sections III and IV are applicable.
We provide the following definitions to state our result:
\begin{align}
   & V_{k,j}(\theta)=\prod_{m=j}^k U_m (\theta_m) \text{, for } 0<j<k\le L, \text{ and }\\
   & \tilde{V}_{k,j}(\theta, \eta)=\prod_{m=j}^k \tilde{U}_m (\theta_m, \eta_m) \text{, for }  0<j<k\le L.
\end{align}
The circuit segment $V_{k,j}$ models all the gates between the $j$-th and $k$-th layer of the noise-free circuit, while $\tilde{V}_{k,j}$ models the same for the coherent-error-perturbed circuit.

\begin{theorem}
   Problem \eqref{coherentperturbedproblem} is equivalent, up to first-order approximation in $\eta$, to \eqref{perturbedobs}, with $\epsilon=\|\eta\|_\infty$, $\tilde{O}(\theta)=\sum_{j=1}^{L} \frac{\eta_j}{\|\eta\|_\infty} [\hat{H}_{\mathrm{e},j}(\theta), O]$ and $\hat{H}_{\mathrm{e},j}(\theta)=V_{L,j}(\theta)H_{\mathrm{e},j}V_{L,j}^\dagger(\theta)$.
\end{theorem}
We say that two optimization problems are equivalent if they have the same cost $\ell(\theta)$ up to scaling or offset.

Theorem 4 will be proven via two intermediate Lemmas below. 
We start by focusing on a problem with the structure 
\begin{align}
    \min_{\theta \in \Theta} \Tr[O U_\mathrm{e}(\eta, \theta){U}(\mathbf{\theta})\rho_0{U}^\dagger(\mathbf{\theta})U^\dagger_\mathrm{e}(\eta, \theta)], \label{coherentperturbedlast}
\end{align} 
with $U_\mathrm{e}(\eta, \theta)=\prod_{j=1}^{L}U_{\mathrm{e},j}(\eta_j, \theta)$, and $U_{\mathrm{e},j}(\eta_j, \theta)=e^{-i \eta_j \hat{H}_{\mathrm{e},j}(\theta)}$. 
Note that here the coherent errors depend on the algorithm parameters $\theta$.
First, we show that problem \eqref{coherentperturbedlast} can be approximated by a perturbed observable problem \eqref{perturbedobs} for a suitably defined observable $\tilde{O}(\theta)$, and then we prove that any VQE affected by coherent errors 
can be represented in this fashion.

\begin{lemma}
   Problem \eqref{coherentperturbedlast} is equivalent to \eqref{perturbedobs} up to first order approximation in $\eta$, with $\epsilon=\|\eta\|_\infty$ and
   $\tilde{O}(\theta)=\sum_{j=1}^{L} \frac{\eta_j}{\|\eta\|_\infty} [\hat{H}_{\mathrm{e},j}(\theta), O]$.  
\end{lemma}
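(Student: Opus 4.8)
The plan is to exploit the cyclic invariance of the trace to move the coherent-error unitaries off the input state and onto the observable, and then to linearize the resulting conjugated observable in $\eta$. Writing $\rho(\theta)=U(\theta)\rho_0 U^\dagger(\theta)$ as in the nominal case, cyclicity of the trace turns the cost \eqref{coherentperturbedlast} into $\tilde\ell(\theta)=\Tr[\,U_\mathrm{e}^\dagger(\eta,\theta)\,O\,U_\mathrm{e}(\eta,\theta)\,\rho(\theta)\,]$, so the entire effect of the coherent error is carried by the effective observable $O_{\mathrm{eff}}(\eta,\theta):=U_\mathrm{e}^\dagger(\eta,\theta)\,O\,U_\mathrm{e}(\eta,\theta)$, which may depend on $\theta$ through the generators $\hat H_{\mathrm{e},j}(\theta)$, exactly as permitted in \eqref{perturbedobs}.

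Next I would set $\epsilon=\|\eta\|_\infty$ and Taylor-expand $U_\mathrm{e}(\eta,\theta)=\prod_{j=1}^L e^{-i\eta_j\hat H_{\mathrm{e},j}(\theta)}$. Expanding each factor (equivalently, via Baker--Campbell--Hausdorff) gives $U_\mathrm{e}(\eta,\theta)=I-i\sum_{j=1}^L\eta_j\hat H_{\mathrm{e},j}(\theta)+R(\eta,\theta)$ with $\|R(\eta,\theta)\|=\mathcal{O}(\epsilon^2)$ uniformly over the compact set $\Theta$. Substituting into $O_{\mathrm{eff}}$ and collecting by order, the zeroth-order term is $O$, the first-order term is $i\sum_{j=1}^L\eta_j[\hat H_{\mathrm{e},j}(\theta),O]$, and all remaining contributions are $\mathcal{O}(\epsilon^2)$; pulling the factor $\epsilon=\|\eta\|_\infty$ out of the first-order term yields
\begin{align}
   O_{\mathrm{eff}}(\eta,\theta)&=O+\epsilon\,\tilde O(\theta)+\mathcal{O}(\epsilon^2), \nonumber\\
   \tilde O(\theta)&=\sum_{j=1}^L\frac{\eta_j}{\|\eta\|_\infty}\,[\hat H_{\mathrm{e},j}(\theta),O],
\end{align}
up to the imaginary unit, whose presence is fixed by requiring $\tilde O(\theta)$ to be Hermitian. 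Plugging this back into the trace gives $\tilde\ell(\theta)=\Tr[(O+\epsilon\tilde O(\theta))\rho(\theta)]+\mathcal{O}(\epsilon^2)$, which is exactly the cost of \eqref{perturbedobs} once the second-order remainder is discarded; this is the claimed first-order equivalence.

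It then remains to check the boundedness hypotheses needed to invoke Theorem~1. Since $|\eta_j|\le\|\eta\|_\infty$ for every $j$, the coefficients satisfy $|\eta_j|/\|\eta\|_\infty\le 1$, so $\|\tilde O(\theta)\|\le\sum_{j=1}^L\|[\hat H_{\mathrm{e},j}(\theta),O]\|\le 2\|O\|\sum_{j=1}^L\|\hat H_{\mathrm{e},j}(\theta)\|$; continuity of $\theta\mapsto\hat H_{\mathrm{e},j}(\theta)$ on the compact set $\Theta$ makes the right-hand side uniformly bounded, giving $M_1$. Assuming $\hat H_{\mathrm{e},j}(\cdot)$ is continuously differentiable, differentiating term by term gives $\nabla\tilde O(\theta)=\sum_{j=1}^L\frac{\eta_j}{\|\eta\|_\infty}[\nabla\hat H_{\mathrm{e},j}(\theta),O]$, and the same estimate with $\nabla\hat H_{\mathrm{e},j}$ in place of $\hat H_{\mathrm{e},j}$ supplies a uniform bound $M_2$.

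The step I expect to be the main obstacle is making ``equivalent up to first order approximation in $\eta$'' rigorous: one must control the Taylor/BCH remainder $R(\eta,\theta)$, and hence the induced $\mathcal{O}(\epsilon^2)$ error in $O_{\mathrm{eff}}$ and in $\tilde\ell$, \emph{uniformly} over the compact parameter set $\Theta$, so that truncating at first order is harmless for the analysis feeding into Theorem~1. A secondary but necessary point is pinning down the regularity of $\hat H_{\mathrm{e},j}(\theta)$ in $\theta$: smoothness is needed for $\nabla\tilde O$ to exist and be bounded and for the perturbed cost to remain smooth (indeed analytic), so that the Kurdyka--\L{}ojasiewicz machinery underlying Theorem~1 still applies; this should be recorded as a standing assumption on the coherent-error model.
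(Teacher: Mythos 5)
Your proof is correct and follows essentially the same route as the paper: cyclic invariance of the trace to conjugate the observable into $\hat O(\eta,\theta)=U_\mathrm{e}^\dagger(\eta,\theta)\,O\,U_\mathrm{e}(\eta,\theta)$, followed by a first-order Taylor expansion in $\eta$ with $\epsilon=\|\eta\|_\infty$ factored out of the commutator sum. Your extra observations---the imaginary unit needed in the commutator for Hermiticity (which the paper's statement omits), the uniform control of the $\mathcal{O}(\epsilon^2)$ remainder over the compact set $\Theta$, and the boundedness of $\tilde O(\theta)$ and $\nabla\tilde O(\theta)$ (which the paper addresses separately after Lemma 2)---are careful refinements of the same argument rather than a different approach.
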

\begin{proof}

   The first step is to show that \eqref{coherentperturbedlast} is equivalent to \eqref{nominalproblem} with a rotated observable.
   Indeed, 
   \begin{align}
   & \Tr[O U_\mathrm{e}(\eta, \theta){U}(\mathbf{\theta})\rho_0{U}^\dagger(\mathbf{\theta})U^\dagger_\mathrm{e}(\eta, \theta)] \nonumber\\
   &=\Tr[U^\dagger_\mathrm{e}(\eta, \theta) O U_\mathrm{e}(\eta, \theta){U}(\mathbf{\theta})\rho_0{U}^\dagger(\mathbf{\theta})] \nonumber \\
   &=\Tr[\hat{O}(\eta, \theta)U(\theta)\rho_0{U}^\dagger(\mathbf{\theta})], 
\end{align} 
defining $\hat{O}(\eta, \theta)=U^\dagger_\mathrm{e}(\eta, \theta) O U_\mathrm{e}(\eta, \theta)$.
Taking a first-order approximation of $\hat{O}(\eta, \theta)$ in $\eta$ we get
\begin{align}
   &\hat{O}(\eta, \theta)= O + \sum_{j=1}^{L}\eta_j [\hat{H}_{\mathrm{e},j}(\theta),O]+\mathcal{O}(\|\eta\|^2) \nonumber \\
   &= O + \|\eta\|_\infty \sum_{j=1}^{L}\frac{\eta_j}{\|\eta\|_\infty}[\hat{H}_{\mathrm{e},j}(\theta),O]+\mathcal{O}(\|\eta\|^2), 
\end{align}
and we define $\epsilon=\|\eta\|_\infty$.
This proves our statement.
\end{proof}

Next, we show that any VQE affected by coherent errors can be rewritten in the form \eqref{coherentperturbedlast}.
We note that a strategy similar to Lemma 2 was pursued in \cite{Berberich2025} for the robustness analysis of generic quantum circuits.
It remains to prove that \eqref{coherentperturbedproblem} can be rewritten as \eqref{coherentperturbedlast}. 
We can do this by exploiting the interaction picture. 
\begin{lemma}
   Problem \eqref{coherentperturbedproblem} is equivalent to \eqref{coherentperturbedlast} with 
   $\hat{H}_{\mathrm{e},j}(\theta)=V_{L,j}(\theta)H_{\mathrm{e},j}V_{L,j}^\dagger(\theta)$.
\end{lemma}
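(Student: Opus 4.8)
The plan is to rewrite the coherent-error circuit in a rotating frame (interaction picture) so that every error unitary is transported to the outermost position, adjacent to the observable, at the cost of conjugating its generator by the portion of the nominal circuit that lies between it and the observable. The single algebraic fact underlying everything is that for any unitary $g$ and Hermitian $H$ one has $g\,e^{-iH}g^\dagger=e^{-i g H g^\dagger}$; in particular, conjugating a coherent-error gate $U_{\mathrm e,j}(\eta_j)=e^{-i\eta_j H_{\mathrm e,j}}$ by a product of nominal gates again yields a single-exponential coherent-error gate, with the generator conjugated and the angle $\eta_j$ unchanged. This is exactly what is needed to land in the form \eqref{coherentperturbedlast}, in which each error factor is required to be of the form $U_{\mathrm e,j}(\eta_j,\theta)=e^{-i\eta_j\hat H_{\mathrm e,j}(\theta)}$.

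Concretely, I would start from $\tilde U(\theta,\eta)=\prod_{j=1}^L U_{\mathrm e,j}(\eta_j)U_j(\theta_j)$ and commute the error gates past the nominal gates one at a time. Each time an error gate is moved past a nominal gate $U_m(\theta_m)$ to its outer side, the rewriting $U_m U_{\mathrm e,j} = \big(U_m U_{\mathrm e,j} U_m^\dagger\big) U_m$ replaces its generator by $U_m H_{\mathrm e,j} U_m^\dagger$; iterating over all the nominal gates lying between the error in layer $j$ and the observable accumulates precisely the conjugation by $V_{L,j}(\theta)=\prod_{m=j}^{L}U_m(\theta_m)$, turning $H_{\mathrm e,j}$ into $\hat H_{\mathrm e,j}(\theta)=V_{L,j}(\theta)H_{\mathrm e,j}V_{L,j}^\dagger(\theta)$. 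Carried out for all layers, this regroups $\tilde U(\theta,\eta)$ as $U_{\mathrm e}(\eta,\theta)\,U(\theta)$ with $U_{\mathrm e}(\eta,\theta)=\prod_{j=1}^{L}e^{-i\eta_j\hat H_{\mathrm e,j}(\theta)}$, which is exactly the circuit appearing in \eqref{coherentperturbedlast}; substituting into $\Tr[O\,\tilde U(\theta,\eta)\rho_0\tilde U^\dagger(\theta,\eta)]$ gives the claimed equivalence. I would make this rigorous by induction on the number of layers $L$: the case $L=1$ is immediate, and in the inductive step one peels off the outermost layer, applies the hypothesis to the remaining $L-1$ layers, and performs one further commutation using the rotating-frame identity, checking that the conjugating unitaries update from $V_{L-1,\cdot}$ to $V_{L,\cdot}$ as claimed.

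Two remarks on scope. First, the equivalence obtained here is exact — no truncation in $\eta$ is involved — since we have only reorganized a product of unitaries; the first-order approximation enters solely in Lemma 1, so Theorem 2 follows by composing the two lemmas. Second, the $H_{\mathrm e,j}$ need only be Hermitian, and because $V_{L,j}(\theta)$ is unitary and (by assumption on the ansatz) smooth in $\theta$, the conjugated generators $\hat H_{\mathrm e,j}(\theta)$ inherit boundedness and bounded $\theta$-gradient, which is what is later passed into Lemma 1 and hence into Theorem 1. The only genuine subtlety is bookkeeping of the non-commuting product: one must track carefully which nominal gates lie between layer $j$ and the observable, and in what order successive error gates are transported, since an already-transported error gate must itself be carried through later commutations. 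Handling this consistently — most cleanly through the layer-by-layer induction — is where the argument needs care, but it requires no estimates, only repeated use of $g\,e^{-iH}g^\dagger=e^{-i g H g^\dagger}$.
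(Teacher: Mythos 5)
Your proof is correct and follows essentially the same route as the paper: both push each coherent-error gate through the remaining nominal layers onto the output via the exact conjugation identity $g\,e^{-iH}g^\dagger=e^{-i gHg^\dagger}$, iterating over the circuit (the paper phrases it in the interaction picture; your induction on $L$ is the same bookkeeping), and your remark that the equivalence is exact, with the first-order truncation entering only in Lemma~1, matches the lemma as stated. The only cosmetic discrepancy is the off-by-one between conjugating by $V_{L,j}$ or $V_{L,j+1}$, which merely reflects whether $U_{\mathrm{e},j}$ is taken to act before or after $U_j$ in the layer, a convention the paper itself uses inconsistently.
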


\begin{proof}
   To prove the lemma, we show how noise on one intermediate layer can be pushed onto the observable. The statement follows by iteratively repeating the procedure
   over the circuit. 
   It is possible to show that a circuit $\tilde{U}(\theta, \eta)$ with a perturbation acting up to the $j$-th layer can be represented as
   \begin{align}
      & \tilde{U}(\theta, \eta) =V_{L,j}(\theta)U_{\mathrm{e,j}}(\eta_j)\tilde{V}_{j,1}(\theta) \nonumber \\
      & = \tilde{U}_{\mathrm{e,j}}(\eta_j, \theta) V_{L,j}(\theta)\tilde{V}_{j,1}(\theta), 
   \end{align} 
   with $\tilde{U}_{\mathrm{e,j}}(\eta_j, \theta)  =V^\dagger_{L,j}(\theta) U_{\mathrm{e,j}}(\eta_j) V_{L,j}(\theta)$.
   By Taylor expansion of the matrix exponential, it is easy to check that 
   \begin{align}
      V^\dagger_{L,j}(\theta) \tilde{U}_{\mathrm{e,j}}(\eta_j, \theta) V_{L,j}(\theta)=e^{-i \eta_j V^\dagger_{L,j}(\theta) H_{\mathrm{e,j}}V_{L,j}(\theta)}.   
   \end{align}
   Defining $\tilde{H}_{\mathrm{e},j}=V^\dagger_{L,j}(\theta) H_{\mathrm{e,j}}V_{L,j}(\theta)$,
   \begin{align}
      \tilde{U}(\theta, \eta) =e^{-i \eta_j \tilde{H}_{\mathrm{e},j}} U(\theta).
   \end{align}
   In case noise is present on each layer, we repeat the procedure for each layer iteratively and our statement is proven.
\end{proof}
It remains to show that $\tilde{O}(\theta)$ is smooth in $\theta$. This is easily done by checking that it is a finite sum of smooth elements.
It follows that the hypotheses on $\tilde{O}(\theta)$ for Theorem 2 hold for VQEs affected by general coherent errors.

\subsection{Incoherent noise}
In this subsection, we show how to represent incoherent errors as perturbed observables. 
We consider incoherent error-perturbed VQEs structured as
\begin{align}
    \min_{\theta \in \Theta} \Tr[O \mathcal{E}_\mathbf{\theta}(\rho_0)], \label{perturbedproblem}
\end{align} 
where 
\begin{align}
   &\mathcal{E}_\mathbf{\theta}(\rho_0)=\mathcal{E}_{L,\mathbf{\theta}_L} \circ \mathcal{E}_{L-1,\mathbf{\theta}_{L-1}} \circ \dots \circ \mathcal{E}_{1,\mathbf{\theta}_1} (\rho_0)\text{,} \\
   &\mathcal{E}_{j,\mathbf{\theta}_j}(\rho)=\mathcal{E}_j (U_j(\mathbf{\theta_j}) \rho U_j(\mathbf{\theta_j})^\dagger)\text{, and } \\
   &\mathcal{E}_j(\rho)=(1-p)\rho + p \sum_{k=1}^{m} w_{j,k} E_{j,k} \rho E_{j,k}^\dagger,
\end{align}
with Kraus operators $E_{j,k}$ with norm $\|E_{j,k}\| \le 1$ for each $k,j$, and $\sum_{k=1}^{m} w_{j,k}=1$.
This quantum operation models the action of incoherent noise after each layer. The only restriction on the noise is that it must leave the input density operator $\rho$ unaltered with probability $1-p$.
This includes, for example, all Pauli noise channels. Physically, this can be interpreted as a channel such that there is at least one situation in which the environment does not gain information about the state of the system. 
Amplitude damping channels, for example, do not satisfy this property, as emissions are impossible from relaxed states, therefore rendering the lack of an emission event itself informative. 
We proceed with the same approach as in subsection V.A. 

We begin by stating our main result: 
\begin{theorem}
   Problem \eqref{perturbedproblem} is equivalent, up to a constant scaling factor of the cost, to \eqref{perturbedobs}, with $\epsilon=\frac{p}{1-p}$,
   $\tilde{O}(\theta)=\sum_{k=1}^{m} w_{k} \tilde E_{k}^\dagger(\theta) O \tilde E_{k}(\theta)$ and suitably defined $\tilde{E}_k(\theta)$.
\end{theorem}

We again prove the result via two intermediate lemmas.

To this end, we first consider the perturbed VQE 
\begin{align}
   \min_{\theta \in \Theta} \Tr[O \mathcal{E}_{L,\theta}(U(\mathbf{\theta}) \rho_0 U(\mathbf{\theta})^\dagger)], \label{perturbedlast}
\end{align}
with $\mathcal{E_{L,\theta}(\rho)}=(1-p)\rho + p \sum_{k=1}^{m} w_k \tilde E_k(\theta) \rho \tilde E_k^\dagger(\theta)$, $\tilde E_k(\theta)$ being a Kraus operator with norm $\|\tilde E_k\| \le 1$, possibly depending on the parameters, and $\sum_{k=1}^{m} w_k=1$.
First we show that quantum channels on the output of the circuit can be interpreted as perturbations on the observable. 
Then, we show how quantum channels can be represented as acting on the output as per \eqref{perturbedlast}.

\begin{lemma}
   Problem \eqref{perturbedlast} is equivalent, up to a scaling factor, to \eqref{perturbedobs} with $\epsilon=\frac{p}{1-p}$ and
   $\tilde{O}(\theta)=\sum_{k=1}^{m} w_k \tilde E_k^\dagger(\theta) O \tilde E_k(\theta)$.
\end{lemma}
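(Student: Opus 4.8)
The plan is a direct computation using the linearity of the channel $\mathcal{E}_\theta$ and the cyclic property of the trace. Writing $\rho(\theta)=U(\theta)\rho_0 U^\dagger(\theta)$ as in the main text, I would first insert the explicit form of $\mathcal{E}_\theta$ into the cost of \eqref{perturbedlast} to obtain
\begin{align}
   \Tr[O\,\mathcal{E}_{L,\theta}(\rho(\theta))] = (1-p)\,\Tr[O\rho(\theta)] + p\sum_{k=1}^{m} w_k \Tr[O E_k(\theta)\rho(\theta)E_k^\dagger(\theta)].
\end{align}
Moving each $E_k(\theta)$ through the trace via cyclicity turns $\Tr[O E_k\rho E_k^\dagger]$ into $\Tr[E_k^\dagger O E_k\,\rho]$, so the second term collapses to $p\,\Tr[\tilde O(\theta)\rho(\theta)]$ with $\tilde O(\theta)=\sum_{k=1}^{m} w_k E_k^\dagger(\theta) O E_k(\theta)$ exactly as claimed.

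Factoring out $(1-p)$ then gives $\Tr[O\,\mathcal{E}_{L,\theta}(\rho(\theta))] = (1-p)\,\Tr[(O+\frac{p}{1-p}\tilde O(\theta))\rho(\theta)]$, which is the cost of \eqref{perturbedobs} with $\epsilon=\frac{p}{1-p}$ multiplied by the positive constant $1-p$; since a positive rescaling of the objective leaves the minimizers unchanged, the two problems are equivalent up to that scaling factor. To make the reduction meaningful I would also verify that $\tilde O(\theta)$ is a legitimate perturbation observable: it is Hermitian because $(E_k^\dagger O E_k)^\dagger = E_k^\dagger O E_k$ and the $w_k$ form a convex combination, and it is bounded since $||\tilde O(\theta)|| \le \sum_k w_k ||E_k(\theta)||^2\,||O|| \le ||O||$ using $||E_k||\le 1$ and $\sum_k w_k = 1$; the same estimate applied to $\nabla\tilde O(\theta)$ shows it is bounded whenever the Kraus operators depend smoothly on $\theta$ with uniformly bounded derivatives, which is what Theorem 1 requires downstream.

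The computation itself is routine, so the delicate points are bookkeeping rather than analysis. The first is that the scaling factor must not alter $\Theta^*$, which is where the standing assumption $p<1$ enters (and which is also needed simply for $\epsilon=\frac{p}{1-p}$ to be well defined); I would state this explicitly. The second, and the step I expect to demand the most care, is the regularity of $\tilde O(\theta)$: because $E_k(\theta)$ may depend on the parameters, the bounds on $||\tilde O(\theta)||$ and $||\nabla\tilde O(\theta)||$ must hold uniformly over the compact $\Theta$ so that the hypotheses of Theorem 1 are inherited by the channel-perturbed VQE. Everything else follows from one application of trace cyclicity and linearity.
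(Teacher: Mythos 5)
Your proposal is correct and follows essentially the same route as the paper: a direct expansion of the channel, linearity and cyclicity of the trace to pull $E_k^\dagger(\theta) O E_k(\theta)$ onto the observable, and factoring out $(1-p)$ with $\epsilon=\frac{p}{1-p}$, noting the positive rescaling does not move the minimizers. Your added checks on Hermiticity and the uniform bounds $\|\tilde O(\theta)\|\le\|O\|$ (and on $\nabla\tilde O$) are sensible bookkeeping that the paper only addresses informally in the surrounding text, but they do not change the argument.
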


\begin{proof}
The proof follows from the following steps
\begin{align}
    & \tilde{\ell} (\theta) = \Tr[O \mathcal{E}_{L,\theta}(\rho(\theta))] \nonumber \\
    & = \Tr[O ((1-p)\rho(\theta) + p \sum_{k=1}^{m} w_k \tilde E_k(\theta) \rho(\theta) \tilde E_k^\dagger(\theta))] \nonumber \\
    & = \Tr[((1-p)O + p \sum_{k=1}^{m} w_k \tilde E_k^\dagger(\theta)  O \tilde E_k(\theta))\rho(\theta)] \nonumber \\
    & = (1-p) \Tr[(O + \epsilon \tilde{O}(\theta))\rho(\theta)],
\end{align}
where we define $\epsilon=\frac{p}{1-p}$ as the perturbation level,
$\tilde{O}(\theta)=\sum_{k=1}^{m} w_k \tilde E_k^\dagger(\theta)  O \tilde E_k(\theta)$. 
Not that the scaling $1-p$ does not affect the optimal decision variable.
\end{proof}
Lemma 3 shows that it is possible to represent a parameter-dependent quantum operation acting on the output of the quantum circuit $U(\theta)$ as a perturbed observable. 
In the following, we show that \eqref{perturbedlast} can model any incoherent noise process described by \eqref{perturbedproblem}.

We can state the following lemma: 
\begin{lemma}
   Problem \eqref{perturbedproblem} is equivalent to \eqref{perturbedlast} with 
   $p=1-\prod_j (1-p_j)$ and suitably defined operators $\tilde{E}_{j,k}(\theta)$. 
\end{lemma}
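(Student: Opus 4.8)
The plan is to mirror the strategy used for coherent noise in Lemma~2: rewrite the layered channel $\mathcal{E}_{\theta}$ by \emph{pushing} every intermediate noise channel through all subsequent unitaries onto the output of $U(\theta)$, and then collapse the resulting composition of $L$ channels into a single one of the form appearing in \eqref{perturbedlast}. Writing the noisy evolution as $\mathcal{E}_{\theta} = \mathcal{E}_L\circ\mathrm{Ad}_{U_L}\circ\mathcal{E}_{L-1}\circ\mathrm{Ad}_{U_{L-1}}\circ\cdots\circ\mathcal{E}_1\circ\mathrm{Ad}_{U_1}$ with $\mathrm{Ad}_{W}(\rho)=W\rho W^{\dagger}$, the first step is the elementary commutation identity: for any unitary $W$ and any layer channel $\mathcal{E}_j(\rho)=(1-p_j)\rho + p_j\sum_k w_{j,k}E_{j,k}\rho E_{j,k}^{\dagger}$ one has $\mathrm{Ad}_{W}\circ\mathcal{E}_j = \mathcal{E}_j^{W}\circ\mathrm{Ad}_{W}$, where $\mathcal{E}_j^{W}$ is the channel with the same identity weight $1-p_j$, the same weights $w_{j,k}$, and Kraus operators $WE_{j,k}W^{\dagger}$. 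This is checked directly by inserting $W^{\dagger}W$.

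Applying this identity repeatedly moves all the $\mathrm{Ad}_{U_m}$ factors to the right, where they collect as $\mathrm{Ad}_{U_L\cdots U_1}=\mathrm{Ad}_{U(\theta)}$, while each $\mathcal{E}_j$ becomes conjugated by the product of the unitaries of the later layers, which in the notation of the paper is $V_{L,j+1}(\theta)$ (with the convention $V_{L,L+1}(\theta):=I$). Hence $\mathcal{E}_{\theta}(\rho_0) = \big(\mathcal{E}_L'\circ\cdots\circ\mathcal{E}_1'\big)\big(U(\theta)\rho_0 U^{\dagger}(\theta)\big)$, where $\mathcal{E}_j'(\rho) = (1-p_j)\rho + p_j\sum_k w_{j,k}\tilde E_{j,k}(\theta)\rho\tilde E_{j,k}^{\dagger}(\theta)$ with $\tilde E_{j,k}(\theta) = V_{L,j+1}(\theta)E_{j,k}V_{L,j+1}^{\dagger}(\theta)$; note that unitary conjugation preserves the operator norm, so $\|\tilde E_{j,k}(\theta)\| = \|E_{j,k}\|\le 1$.

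The second step collapses $\mathcal{E}_L'\circ\cdots\circ\mathcal{E}_1'$ into a single channel. Expanding the composition, every resulting term is labeled by a subset $S\subseteq\{1,\dots,L\}$ of layers taking their noise branch together with a choice $k_j$ for each $j\in S$; its Kraus operator is the product $\prod_{j\in S}\tilde E_{j,k_j}(\theta)$ taken in decreasing order of $j$, and its scalar weight is $\prod_{j\in S}p_j w_{j,k_j}\prod_{j\notin S}(1-p_j)$. The unique term with $S=\emptyset$ contributes $\prod_j(1-p_j)\,\rho$, and summing the remaining weights over all nonempty $S$ and all $k$ gives $1-\prod_j(1-p_j)=:p$, using $\sum_k w_{j,k}=1$. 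Relabeling the composite index $(S,(k_j)_{j\in S})$ as a single index $k$, and setting $\tilde E_k(\theta)$ to be the corresponding ordered product and $\tilde w_k:=\tfrac1p\prod_{j\in S}p_j w_{j,k_j}\prod_{j\notin S}(1-p_j)$ (so that $\sum_k\tilde w_k=1$ and $\|\tilde E_k(\theta)\|\le1$ as a product of contractions), one obtains that $\mathcal{E}_L'\circ\cdots\circ\mathcal{E}_1'$ equals $(1-p)\,\mathrm{id}+p\sum_k\tilde w_k\,\tilde E_k(\theta)(\cdot)\tilde E_k^{\dagger}(\theta)$, which is exactly the channel $\mathcal{E}_{L,\theta}$ of \eqref{perturbedlast} with $p=1-\prod_j(1-p_j)$. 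Since both problems then carry the identical cost $\Tr[O\,\mathcal{E}_{\theta}(\rho_0)]$, equivalence follows, and the ``suitably defined operators'' of the statement are precisely the ordered products of the conjugated single-layer Kraus operators $\tilde E_{j,k}(\theta)$.

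I expect the main obstacle to be bookkeeping rather than anything conceptual: keeping track of which unitary conjugates which $\mathcal{E}_j$ as the channels are commuted past one another (an off-by-one in the index of $V_{L,\cdot}(\theta)$, and matching it to the convention used in the coherent case, is easy to get wrong), and verifying cleanly the normalization identity $\sum_{S\neq\emptyset}\prod_{j\in S}p_j\prod_{j\notin S}(1-p_j)=1-\prod_j(1-p_j)$ so that the collapsed channel is trace preserving with the claimed $p$. A secondary point worth spelling out is that contractivity $\|E_k\|\le1$ demanded by \eqref{perturbedlast} is automatically inherited, since both unitary conjugation and composition preserve the bound on the operator norm.
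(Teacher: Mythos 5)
Your proposal is correct and follows essentially the same route as the paper: push each intermediate channel through the later unitaries by conjugating its Kraus operators (the paper's $\tilde{E}_{j,k}(\theta)=V_{L,j}(\theta)E_k(\theta)V_{L,j}^\dagger(\theta)$, your $V_{L,j+1}$ version being the more careful indexing), then merge the composed channels into a single one whose identity branch carries weight $\prod_j(1-p_j)$, giving $p=1-\prod_j(1-p_j)$. Your explicit subset expansion with the normalization and contractivity checks simply spells out what the paper compresses into ``the composition of quantum operations is a quantum operation'' plus the probability argument.
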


\begin{proof}
The proof can be decomposed in two steps.
First, we show that noise acting on any intermediate layer can be seen as a different quantum operation acting on the 
output layer.
Second, we show how to combine nested quantum operations into a single one. 
Combining this two results and using them iteratively for each layer proves the lemma. 

In order to show that mid circuit noises can be represented as a different noise process on the output,
we start defining 
\begin{align}
   \rho_j(\theta)=\mathcal{E}_{j-1,\mathbf{\theta}_{j-1}} \circ \mathcal{E}_{j-2,\mathbf{\theta}_{j-2}} \circ \dots \circ \mathcal{E}_{1,\mathbf{\theta}_1} (\rho_0), 
\end{align}
as the (perturbed) density operator at the $j$-th layer of the circuit. 
Using this, we can define a VQE affected by incoherent noise up to the $j$-th layer as 
\begin{align}
    \min_{\theta \in \Theta} \tilde{\ell} (\theta) = \Tr[O V_{L,j}(\theta) \mathcal{E}_j(\rho_j(\theta))V_{L,j}^\dagger (\theta)],
\end{align}
we can derive the following
\begin{align}
    & \Tr[O V_{L,j}(\theta) \mathcal{E}_j(\rho_j(\theta))V_{L,j}^\dagger (\theta)] \nonumber \\
    & = \Tr[O V_{L,j}(\theta) ((1-p)\rho_j(\theta) + \nonumber p \sum_{k=1}^{m} w_{j,k} E_{j,k} \rho_j(\theta) E^\dagger_{j,k}) V_{L,j}^\dagger (\theta)] \nonumber \\
    & = \Tr[O ((1-p)V_{L,j}(\theta)\rho_j(\theta) V_{L,j}^\dagger (\theta) +  \nonumber \\
    & \quad \quad \quad \quad \quad \quad p \sum_{k=1}^{m} w_{j,k} V_{L,j}(\theta)E_{j,k} \rho_j(\theta) E^\dagger_{j,k}V_{L,j}^\dagger (\theta))]  \nonumber \\
    & = \Tr[O ((1-p)V_{L,j}(\theta)\rho_j(\theta) V_{L,j}^\dagger (\theta) + \nonumber \\
    & \quad \quad \quad \quad \quad \quad p \sum_{k=1}^{m} w_k \tilde{E}_{j,k}(\theta)V_{L,j}(\theta)\rho_j(\theta) V_{L,j}^\dagger (\theta)\tilde{E}^\dagger_{j,k}(\theta))] \nonumber \\
    & = \Tr[O \tilde{\mathcal{E}_\theta}(V_{L,j}(\theta) \rho_j(\theta) V_{L,j}^\dagger (\theta))]    
\end{align}
with $\tilde{E}_{j,k}(\theta) = V_{L,j}(\theta)E_{j,k} V_{L,j}^\dagger (\theta)$. 

For the second step, we use the well-known fact that the composition of quantum operations is itself a quantum operation. 
To derive the error probability $p=1-\prod_j (1-p_j)$ it is sufficient to observe that the probability that the input density operator $\rho$ 
remains unaltered in all the quantum operations is $q=\prod_j (1-p_j)$. It follows immediately that the error probability $p=1-q$.
The Kraus operators can be obtained as all the possible choices of Kraus operators.
By combining the two results and iterating for each layer we get a structure that matches \eqref{perturbedlast}.
\end{proof}

Lemma 4 can also be used to characterize how the perturbation level changes with increasing circuit depth. Indeed, 
assuming constant error probability $p$ for each layer, one gets $\epsilon(L)=\frac{1-(1-p)^L}{(1-p)^L}=(1-p)^{-L}-1$.
This suggests an exponential scaling of the perturbation level with respect to the circuit depth, in line with the well-known exponential decay of information due to quantum channels, shown for example in \cite{Nielsen2012}.

\section{Special cases}
In this subsection, we discuss specific cases for which we can derive stronger results. 

\subsection{Coherent Control errors}
Coherent control errors are multiplicative errors that model imprecisions in the application of the control pulses. 
This can happen, for example, in case of miscalibration. 
A parameterized unitary $U_j(\theta_j)$ affected by a coherent control error is modeled as a unitary 
\begin{align}
   \tilde{U}_j(\theta_j, \eta_j)=e^{-i\theta H_j}e^{-i\eta_j\theta_j H_j}=e^{-i(1+\eta_j)\theta_j H_j}.
\end{align}
Since, for this class of errors, the noisy optimal parameters are just a scaling of the noise-free ones, it is easy to show that 
\begin{align}
   \tilde{\theta}^*_{\eta,j}=\frac{1}{1+\eta_j}\theta^*_j, 
\end{align}
where by $\tilde{\theta}^*_{\eta,j}$ we indicate the $j$-th component of the noisy VQE optimal parameter $\tilde{\theta}^*_{\eta}$. 
It follows that  
\begin{align}
   \|\tilde{\theta}^*_\eta - \theta^* \| \le \|\eta\|_\infty \|\tilde{\theta}^*_\eta\|. 
\end{align}
This is a tight bound, in the sense that it is always saturated for at least one component of $\theta^*_\eta$, that holds globally in the parameter space $\Theta$.
Additionally, from this bound one can see that L2 regularization of the parameter vector $\theta$ as proposed in \cite{Berberich2024b} improves perturbation bounds on optimal parameters as well. 
\\

\subsection{Depolarization Noise} A statement stronger than the bound given by Theorem 1 can be made for the specific case of depolarization noise.
Depolarization noise is defined as 
\begin{align}
   \mathcal{E}_{DN}(\rho)=(1-p)\rho + {p}I. \label{DN}
\end{align} 
In this case we can state that the optimal solution set of the noise-perturbed VQE is exactly the same as the noise-free one. 
\begin{lemma}
   Let an incoherent noise-perturbed VQE \eqref{perturbedlast} be affected by depolarization noise \eqref{DN}.
   Then $\Theta^*_\epsilon=\Theta^*$ for each $\epsilon>0$.
\end{lemma}

\begin{proof}
   The statement can be proven directly as
   \begin{align}
   & \Theta^*_\epsilon=\argmin_{\theta \in \Theta} \tilde{\ell} (\theta) = \argmin\Tr[O \mathcal{E}(U(\mathbf{\theta}) \rho_0 U(\mathbf{\theta})^\dagger)] \nonumber \\ 
   &=\argmin  (1-p) \Tr[O U(\mathbf{\theta})\rho_0 U^\dagger(\mathbf{\theta})] + p Tr[O] \nonumber \\
   &= \argmin  (1-p) \Tr[O U(\mathbf{\theta})\rho_0 U^\dagger(\mathbf{\theta})] = \Theta^* 
\end{align}
\end{proof}
The result highlights that, since depolarization corresponds essentially to a uniform shrinking of the Bloch sphere, it does not affect the correct rotation for the state regardless of the intensity. 
Notice that, although it does not affect asymptotic convergence, depolarization noise represents a major obstacle for VQA training
because it has been shown to induce barren plateaus \cite{Wang2021}. 

\subsection{Noise on output layer} 
Convergence properties from the noise-free VQE transfer to the noisy ones, 
if the perturbation observable is constant over $\theta$.
For perturbed observable VQEs \eqref{perturbedobs}, we can state the following corollary.
\begin{corollary}
   Let $U$ be locally surjective and $\Theta$ be compact. 
   Assume that $\tilde{O}(\theta)=\tilde{O}$ be constant. Then, for almost all initial conditions $\theta_0$ \eqref{perturbedobs} either diverges
   or converges to a global minimum.
\end{corollary}

\begin{proof}
   The partial derivatives of \eqref{perturbedobs} read
\begin{align}
   & \dfrac{\partial}{\partial \theta_j} \ell(\theta)=\braket{\grad \tilde{L} [U (\mathbf{\theta})], \Omega_j (\mathbf{\theta})} + \Tr[\rho(\theta)\dfrac{\partial }{\partial \theta_j}\tilde{O}(\theta)] \nonumber \\
   & = \braket{\grad \tilde{L} [U (\mathbf{\theta})], \Omega_j (\mathbf{\theta})},
\end{align}
where 
\begin{align}
   & \grad \tilde{L}[U(\theta)]=[O+ \epsilon \tilde{O}, \rho(\mathbf{\theta})] \text{,}   
\end{align}
and $\Omega_j(\theta)$ is defined as in the noise-free case and we used the fact that $\dfrac{\partial }{\partial \theta_j}\tilde{O}(\theta) = 0 $ for each $\theta$. 
The result follows applying \cite[Theorem 1]{Wiedmann2025} and using compactness of the parameter space. 
\end{proof}
Corollary 1 shows that perturbations on the observable that are not dependent on the circuit parameters do not affect convergence properties of the VQE. 
This includes every noise model that has a perturbation only right before measurement, including coherent and decoherent noise on a $SU(N)$ Ansatz and 
measurement noise for misaligned measurements.

\section{Numerics}
In this section, we provide numerical validation for our results. All simulations are developed using Pennylane \cite{Bergholm2018}.
The results are obtained training different VQEs on a 5-qubit simulated quantum circuit. All circuits are trained for $1000$ iterations.
The three circuits we simulate are a VQE for a randomly generated Hamiltonian, QAOA for a designed graph and a VQC circuit for the quantum fourier transform. 
We tuned the step size for each circuit individually. For both the randomly generated VQE and the VQC we used the locally surjective ansatz proposed in \cite{Wiedmann2025}.
The VQC and random VQE have depth $L=2$ and QAOA has depth $L=30$.

\begin{figure}[h]
   \includegraphics[width=\columnwidth]{./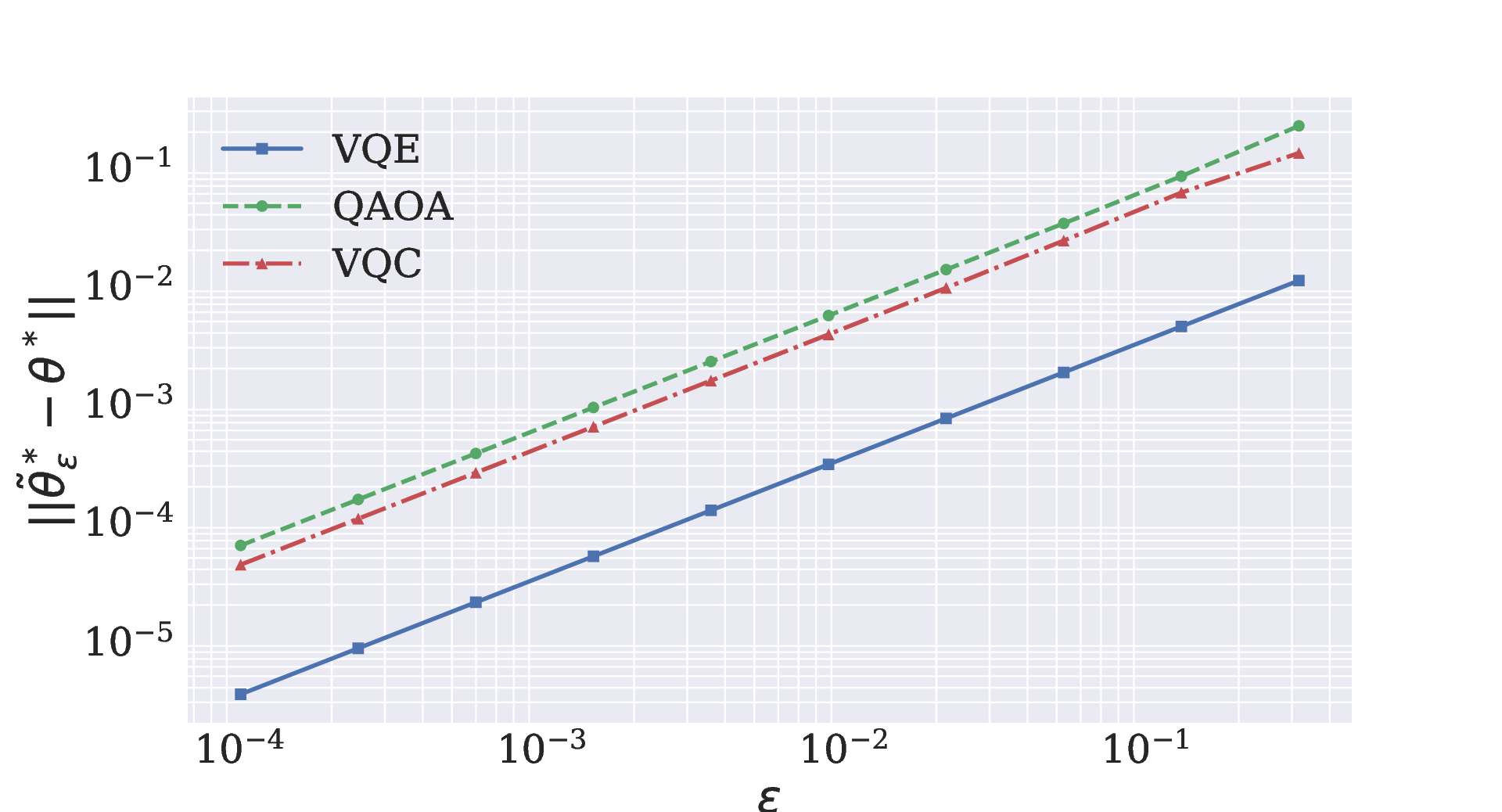}   
   \caption{Distance of the perturbed optimal parameter from the noise-free one over the perturbation level for coherent errors}
   \label{coherentplot}
\end{figure}
  
\begin{figure}[h]
   \includegraphics[width=\columnwidth]{./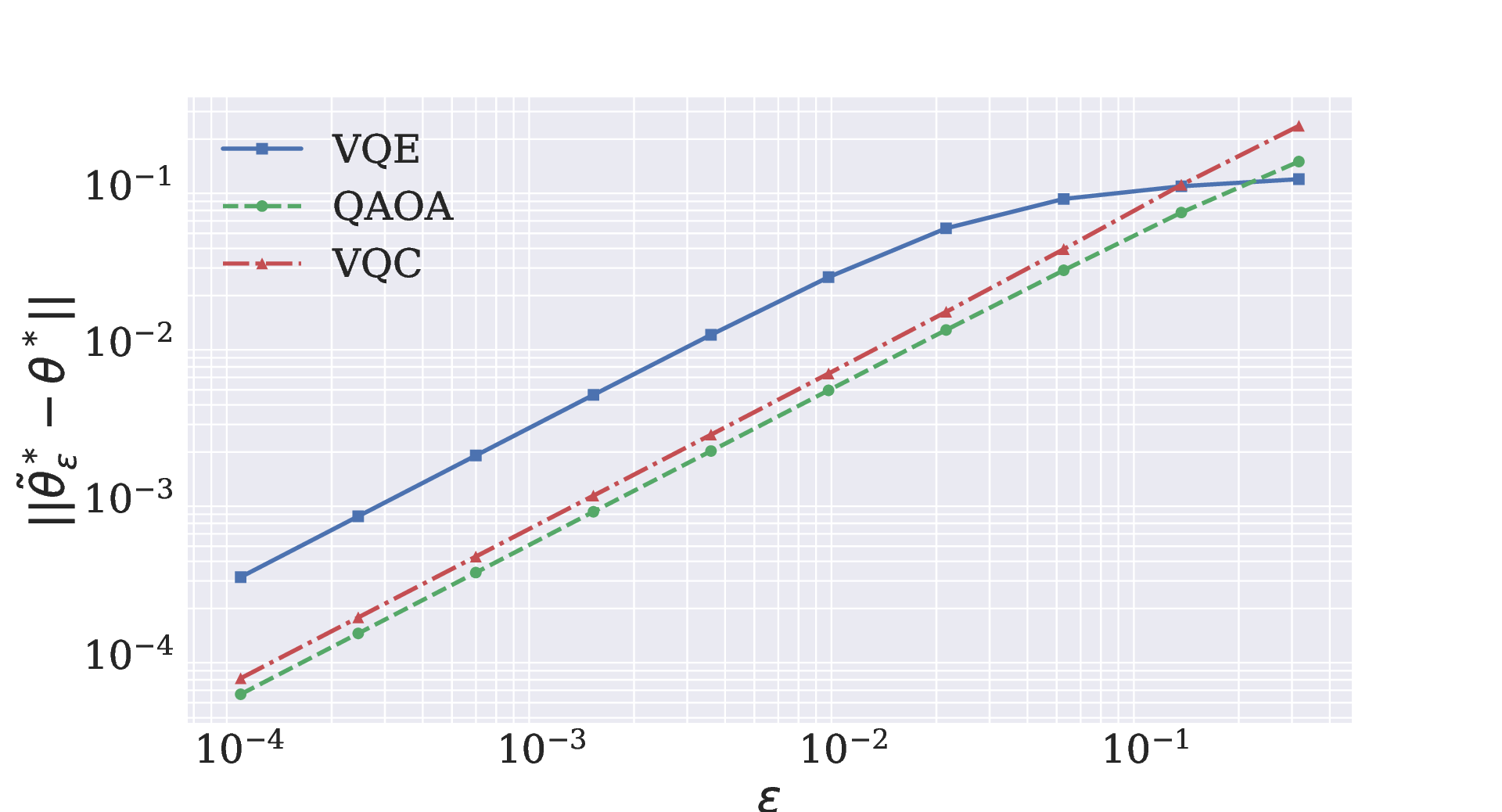}   
   \caption{Distance of the perturbed optimal parameter from the noise-free one over the perturbation level for incoherent errors}
   \label{incoherentplot}
\end{figure}

We first simulated coherent errors. We did so by adding a Z rotation gate with constant parameter $\eta=\epsilon$ after each gate.
We solved the VQE for perturbation levels ranging from $\epsilon=10^{-4}$ to $\epsilon=10^{-\frac{1}{2}}$.

Figure \ref{coherentplot} shows the results of our simulations. In accordance to Theorem 1, we found that in all three cases the distance increases linearly with the perturbation level. 
Further, in our simulations, noisy algorithm initialized with the same initial parameters as their noiseless counterpart converge to a value that is close to the noise-free solution, not only to the 
optimal subset. 

Figure \ref{incoherentplot} shows the same behavior for incoherent errors. We modeled incoherent errors with a bit flip quantum operation with probability $p=1-(\frac{1}{1+\epsilon})^\frac{1}{L}$. 
This guarantees we generate errors that lead to the desired perturbation level. 
\section{Conclusions and future works}

\subsection{Conclusions}
In this work, we investigated the robustness of optimal parameters for VQEs with respect to quantum noise. 
We proved that, if the observable is perturbed, the perturbation on the optimal parameter scales proportionally to the 
perturbation level, as long as the perturbation level is small.
We then showed that this formulation includes both general coherent errors and a wide class of incoherent errors.
We proved that, as long as the perturbation observable does not depend on the VQE parameters, the same convergence guarantees as the
noise-free VQE can be recovered. Further, we have provided numerical evidence that supports our results.

\subsection{Future Works}
There are many possible extensions to this work. While Theorem 1 assumed a compact parameter space, studying the non-compact case is practically relevant
and numerical evidence supports the possibility of generalizing our results to this case. 
This work could also be extended to other classes of VQAs, potentially including to quantum machine learning applications, where the interest is mainly in the optimal parameters rather than the optimal cost.   
Last, the derived analysis results could be used to design a synthesis procedure to obtain inherently robust VQEs. 


\section*{Acknowledgments} 

This work was funded by QuantERA FeedbacQ, by the Deutsche Forschungsgemeinschaft (DFG, German
Research Foundation) – 579821331 and 583407438, and co-funded by the European Commission.

\bibliographystyle{IEEEtran}
\bibliography{bibliography}

\end{document}